\newtheorem{ass}{Assumption}
\newcommand{\MAP}[3]{{#1}:{#2}\to\mathbb{R}^{#3}}
\newcommand{\rea}{\mathbb{R}}
\newcommand{\Rr}{\mathbb{R}}
\begin{document}

\title{Contraction-based Tracking Control of Electromechanical Systems
}


\author{Najmeh Javanmardi        \and
         Pablo Borja \and
         Jacquelien M. A. Scherpen \Letter       
}


\institute{N. Javanmardi and J. M. A. Scherpen  \at
              Jan C. Willems Center for Systems and Control, ENTEG, Faculty of Science and Engineering, University of Groningen, Groningen, The Netherlands \\
              \email{\{n.javanmardi, j.m.a.scherpen\}@rug.nl}          
           \and
           P. Borja \at
School of Engineering, Computing and Mathematics, University of Plymouth, Plymouth, United Kingdom \\
\email{pablo.borjarosales@plymouth.ac.uk}
}

\date{Received: date / Accepted: date}

\maketitle

\begin{abstract}
This paper addresses the trajectory-tracking problem for a class of electromechanical systems. To this end, the dynamics of the plants are modeled in the so-called port-Hamiltonian framework. Then, the notion of contraction is exploited to design the desired closed-loop dynamics and the corresponding tracking controller.
Notably, the proposed control design method does not require solving partial differential equations or changing the coordinates of the plant, which permits preserving the physical interpretation of the controller.
The applicability of the proposed approach is illustrated in several electromechanical systems via simulations.
\keywords{Electromechanical systems \and Port-Hamiltonian systems \and Trajectory tracking \and Energy shaping \and Contractive systems.}

\end{abstract}

\section{Introduction}
\label{intro}
Electromechanical (EM) technology has been increasingly used in numerous applications, e.g., electric drives, motors, micro and nanoelectromechanical systems (MEMS/NEMS), microphones, and switches \cite{dean2009applications,zhang2014electrostatic}. EM systems consist of two subsystems---one mechanical and one electrical---coupled through nonlinear dynamics. Concerning modeling approaches for EM systems, the so-called port-Hamiltonian (pH) framework (see \cite{GEO,VANJEL}) has proven suitable to represent the multi-domain and nonlinear nature of these systems.

Due to the nonlinear couplings between the mechanical and electrical subsystems, inherent instability issues may arise in EM systems. Such issues result in operational limitations on stabilization problems, e.g., pull-in instability in MEMS devices \cite{zhang2014electrostatic}. In the literature, 
 operational range enlargement (e.g.,\cite{zhang2015continuous}) and the stabilization (e.g., \cite{borovic2004control,chu1994analysis,maithripala2005control,nadal,nunna2015constructive,rodriguez2003stabilization}) and trajectory-tracking problems (e.g., \cite{OWUSU,rodriguez2021exponential,sun,zhang2015continuous,Zhu}) for EM systems have been studied by adopting diverse control approaches. Among such strategies, energy-shaping methods have proven suitable for controlling EM systems while preserving the physical interpretation of the closed-loop system. Notably, energy-shaping control methods focus on modifying the energy of the closed-loop system to guarantee the stability of the desired equilibrium or trajectory. In this regard, the interconnection and damping assignment passivity-based control (IDA-PBC) methodology \cite{ortega2002interconnection} is the most versatile energy shaping method in terms of the plants that can be stabilized, as it has been shown in a wide range of applications, e.g., \cite{ortega2001putting,rodriguez2003stabilization}. As part of the IDA-PBC approach, it is essential to solve a set of partial differential equations (PDEs), referred to as matching equations. However, finding the solution to these PDEs remains the most significant bottleneck in IDA-PBC.

%
%

 In \cite{yaghmaei2017trajectory}, the authors propose an IDA-PBC version to solve the trajectory-tracking problem for a class of pH systems. Such an approach is named timed interconnection and damping assignment passivity-based control (tIDA-PBC). This method is based on the notion of contractive pH systems. In particular, in exploiting the fact that the trajectories of contractive systems converge exponentially to each other. Consequently, the proposed tracking method ensures that the trajectories of the closed-loop system converge exponentially to the desired trajectories. An application of this approach is given in the recent paper \cite{rodriguez2021exponential}, where the author implements the tIDA-PBC method in a permanent magnet synchronous motor.

The main contribution of this work consists in proposing a contraction-based method to solve the trajectory-tracking problem for a large class of EM systems. Additionally, the proposed control design approach does not require solving PDEs or any change of coordinates. We stress that methods for controlling EM systems without solving PDEs---e.g., \cite{BOR,nunna2015constructive}---are only suitable for stabilization purposes. On the other hand, methods that solve the trajectory-tracking problem for EM systems---e.g., \cite{rodriguez2021exponential}---require solving PDEs, which may hamper their implementation.


The remainder of the paper is organized as follows: Section \ref{sec:EM moddel} introduces the class of EM systems considered in this paper and provides the problem formulation. The main result of this work, i.e., the control approach to solving the trajectory-tracking problem for EM systems, is given in Section \ref{sec:control}. The applicability of the proposed control method is illustrated in Section \ref{sec:ex}. Finally, some concluding remarks are given in Section \ref{sec:con}.

{\it Notation.} Given a square matrix $A\in\rea^{n\times n}$, the symbols $A\succ0$ and $A\succeq0$ denote that the matrix $A$ is positive definite and positive semi-definite, respectively. Given a differentiable function $H:\rea^{n}\to \rea$, $\nabla H(x)$ is defined as $\nabla H(x)\triangleq[\,\frac{\partial H(x)}{\partial x_1}, \frac{\partial H(x)}{\partial x_2}, \dots, \frac{\partial H(x)}{\partial x_n}]\,^\top$, for $x\in\rea^{n}$. Similarly, ${\nabla}^2 H(x)$ is a matrix whose $ij$th element is given by $\frac{{\partial}^2 H(x)}{\partial x_i \partial x_j}$.
For a full-rank matrix $g \in \Rr^{n \times m}$, with $n\geq m$, we define $g^{\dagger}\triangleq(g^{\top}g)^{-1}g^{\top}$.

 \section{Modeling and Problem Formulation }
\label{sec:EM moddel}

In this section, we present a pH model suitable for representing a broad range of EM systems. This model consists of a mechanical subsystem and an electrical subsystem. The states of the mechanical part are represented by the generalized positions $q\in\rea^{n_{\tt m}}$ and momenta $p\in\rea^{n_{\tt m}}$, and the state of the electrical subsystem is given by $x_{\tt e}\in\rea^{n_{\tt e}}$. To simplify the notation, we denote the state vector of the EM system as
\begin{equation}
\eta = \begin{bmatrix}
    q \\ p \\ x_{\tt e}
\end{bmatrix}.    
\end{equation}
The dynamics of numerous EM systems can be represented in the pH framework as follows:
\begin{equation} 
\label{eq:open-loop}
\arraycolsep=1pt
\def\arraystretch{1.2}
\begin{array}{rcl}
 \begin{bmatrix}
  \dot{q} \\ \dot{p} \\ \dot{x}_{\tt e}
 \end{bmatrix}&=& \begin{bmatrix}0 & \quad  I & \quad  0 \\ -I & \quad -R_{\tt m} & \quad 0 \\0 & \quad  0 & \quad  J_{\tt e}-R_{\tt e}\end{bmatrix}\begin{bmatrix}
{{\nabla _{{q}}}{H(\eta)}} \\ 
{{\nabla _{{p}}}{H(\eta)}} \\
{{\nabla _{{x}_{\tt e}}}{H(\eta)}}
\end{bmatrix} + \begin{bmatrix}
 0 \\0\\ G_{\tt e}
\end{bmatrix}u,\\[0.7cm]H(\eta)&=& \displaystyle\frac{1}{2}{p^{\top}} M^{-1}p+V(q)
+\frac{1}{2}(x_{\tt e}-\mu(q))^\top \Psi (q)(x_{\tt e}-\mu(q)),  
\end{array}
\end{equation}
where $R_{\tt m}\in\rea^{n_{\tt m} \times n_{\tt m}}$ is the damping matrix of the mechanical subsystem, which is positive definite; $J_{\tt e}\in\rea^{n_{\tt e} \times n_{\tt e}}$ is the interconnection matrix of the electrical subsystem, which is skew-symmetric; $R_{\tt e}\in\rea^{n_{\tt e} \times n_{\tt e}}$ is the damping matrix of the electrical subsystem, which is positive semi-definite; $M \in\rea^{n_{\tt m} \times n_{\tt m}}$ denotes the mass inertia matrix, which is positive definite; $G_{\tt e}\in\rea^{n_{\tt e} \times n_{\tt e}}$ is the full-rank input matrix; $u\in\rea^{n_{\tt e}}$ is the input vector such that the product $G_{\tt e}u$ denotes voltage or current. Moreover, the Hamiltonian $H(\eta)$ describes the total energy of the system. The first term of the Hamiltonian defines the mechanical kinetic energy. The second term i.e., $\MAP{V}{\rea^{n_{\tt m}}}{}$, provides the mechanical potential energy. The last term of the Hamiltonian describes the coupling energy of the system, represented by $\MAP{\mu}{\rea^{n_{\tt m}}}{n_{\tt e}}$, where $\MAP{\Psi}{\rea^{n_{\tt m}}}{n_{\tt e} \times n_{\tt e}}$ represents a capacitance or inductance matrix, and it is positive definite. Note that the third term of the Hamiltonian represents the electrical energy stored in
capacitors or inductors, where the capacitance or inductance depends on the displacement $q$. 
We emphasize that the
possible values for $q$ are determined based on physical
limitations in EM systems.

\subsection{Problem formulation}

The objective of this work is to the trajectory-tracking problem for the class of EM systems characterized by \eqref{eq:open-loop}. To this end, we propose the notion of \textit{contractive pH systems} for a class of EM systems.
Remarkably, contraction ensures that all the trajectories of a system converge exponentially together as time tends to infinity. 
Therefore, we propose a contractive pH system as the target dynamics. However, to guarantee that the trajectories of the closed-loop system converge to the desired one, we need to ensure that the desired trajectory is a feasible one. Hence, to formulate the control problem, we first need to introduce the following definition.

\begin{definition}
    \label{def:feasible}  Given an open set $\mathbb{T} \subset \rea_{+}$, $\eta^\star: \mathbb{T}\rightarrow\rea^{2n_{\tt m}+n_{\tt e}}$ is said to be a feasible trajectory of the system \eqref{eq:open-loop} if there exists $u^\star:\mathbb{T}\rightarrow \rea^{n_{\tt e}}$ such that the pair $(\eta^\star,u^\star)$ is a solution of \eqref{eq:open-loop} for all $t\in \mathbb{T}$.
     \end{definition} 
Given Definition \ref{def:feasible} and the concept of contractive pH systems, the control problem boils down to designing a controller such that the closed-loop system is a contractive pH system and the desired trajectory $\eta^{\star}\triangleq \begin{bmatrix}
    q^\star & p^\star & x^\star_{\tt e}
\end{bmatrix}^\top$ is a feasible trajectory of the closed-loop system.

\section{Control Design Approach}
\label{sec:control}

In this section, a static controller is proposed to track the desired trajectory, $\eta^\star$, of the system \eqref{eq:open-loop}. To this end, we assume that $\eta^\star$ is a feasible trajectory. Moreover, because some results might be local depending on the plant, we define an open subset $D_{\tt 0} \subseteq \Rr^{2n_{\tt m}+n_{\tt e}}$ containing the desired trajectory. 

\noindent
The following assumption characterizes the class of EM considered in this work.
\begin{ass}
\label{ass}
Given \eqref{eq:open-loop}, the potential energy satisfies the following

\begin{equation*}
 \nabla^2 V(q)\succ 0, \quad   \forall \; (q,p,x_{\tt e}) \in D_{\tt 0}.
\end{equation*}
\end{ass}

\begin{remark}
 A broad range of EM systems, especially motors, MEMS, and NEMS, satisfy Assumption \ref{ass}. Intuitively, systems that are not affected by gravity comply with this condition.   
\end{remark}

The following lemma is useful for designing suitable target dynamics---i.e., contractive pH systems---which is the first step to solving the trajectory-tracking problem in the context of this work. In the following lemma, we adapt the results of Theorem 4 in \cite{yaghmaei2017trajectory} to EM systems studied in this paper.

\begin{lemma}
\label{lem}
    Consider the following system
  \begin{equation} 
\label{eq:closed-loop}
\arraycolsep=1pt
\def\arraystretch{1.2}
\begin{array}{rcl}
 \begin{bmatrix}
  \dot{q} \\ \dot{p} \\ \dot{x}_{\tt e}
 \end{bmatrix}&=& \begin{bmatrix}0 & \quad  I & \quad  0 \\ -I & \quad -R_{\tt m} & \quad 0 \\0 & \quad  0 & \quad   J_{\tt e}-\bar R_{\tt e}\end{bmatrix}\begin{bmatrix}
{{\nabla _{{q}}}{H_d(\eta,t)}} \\ 
{{\nabla _{{p}}}{H_d(\eta,t)}} \\
{{\nabla _{{x}_{\tt e}}}{H_d(\eta,t)}}
\end{bmatrix},\\[0.7cm]H_d(\eta,t)&=& H(\eta)+\Theta(x_{\tt e},t),
\end{array}
\end{equation} 
where $\bar R_{\tt e}$ is positive definite and  $\MAP{\Theta}{\rea^{n_{\tt e}} \times \rea_{+}}{}$ is a twice differentiable function. Suppose that  $\Theta$ and $\bar R_{\tt{e}}$ can be chosen such that
\begin{itemize}
    \item [$\bullet$] The following inequality holds
	 \begin{equation}
	 \label{eq:cond2}
	 \beta_1 I\prec \nabla_\eta^2 H(\eta)+\nabla_\eta^2 \Theta (x_{\tt e}, t) \prec \beta_2 I, \quad \eta \in D_0
	 \end{equation}
   where $\beta_1,\beta_2>0$.
    \item [$\bullet$] The matrix  
	\begin{equation}
	\label{eq:cond3}
	N= \begin{bmatrix}
	P & \left(  1-\frac{\beta_1} {\beta_2} \right) PP^\top\\ -( 1-\frac{\beta_1}{\beta_2}+\varepsilon)I & -P^\top
	\end{bmatrix},
	\end{equation}
    where
    \begin{equation}
	 \label{eq:cond1}
	 P\triangleq \begin{bmatrix}
	 0 & \quad  I & \quad  0 \\ -I & \quad -R_{\tt m} & \quad 0 \\
	 0 & \quad  0 & \quad  J_{\tt e} - \bar{R}_{\tt e}\end{bmatrix},
	 \end{equation} 
    has no eigenvalues on the imaginary axis for a positive constant $\varepsilon$.
    \end{itemize}
    
    Hence, the system \eqref{eq:closed-loop} is contractive on the open subset $D_0$.
    \end{lemma}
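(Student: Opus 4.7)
The plan is to apply infinitesimal contraction analysis to the variational dynamics of \eqref{eq:closed-loop}. Writing the closed-loop system compactly as $\dot\eta = P\,\nabla H_d(\eta,t)$, the Jacobian of the vector field with respect to $\eta$ is $P\,K(\eta,t)$, where $K(\eta,t) \triangleq \nabla^2 H_d(\eta,t) = \nabla^2 H(\eta) + \nabla^2 \Theta(x_{\tt e},t)$ is symmetric. By hypothesis \eqref{eq:cond2}, $K$ satisfies $\beta_1 I \prec K \prec \beta_2 I$ uniformly on $D_0$. I would therefore look for a constant positive definite metric $Y$ such that $YPK + K P^\top Y \prec 0$ uniformly over every admissible $K$; defining the incremental Lyapunov function $V(\delta\eta) = \delta\eta^\top Y \delta\eta$, this would immediately give $\dot V \le -2\lambda V$ along the variational flow for some $\lambda > 0$, and integrating along a path connecting any two solutions of \eqref{eq:closed-loop} inside $D_0$ would yield exponential convergence of neighbouring trajectories, which is the desired contraction property.

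The key technical step is to turn the uniform-in-$K$ matrix inequality into a single algebraic condition whose feasibility is certified by the Hamiltonian matrix $N$ in \eqref{eq:cond3}. After normalizing $\tilde K = K/\beta_2$, so that $(\beta_1/\beta_2)\, I \prec \tilde K \prec I$, I would introduce an S-procedure / Finsler-type multiplier for the uncertainty $I - \tilde K \succ 0$ to recast the worst-case contraction inequality as an algebraic Riccati inequality on $Y$. The factor $(1-\beta_1/\beta_2)$ appearing in the top-right block of $N$ should then emerge as the gain associated with the worst-case deviation of $\tilde K$ from identity, while the scalar $\varepsilon$ in the $(2,1)$ block encodes the strict margin that produces a nonzero contraction rate. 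By the classical correspondence between algebraic Riccati equations and their Hamiltonian matrices, the absence of eigenvalues of $N$ on the imaginary axis is precisely the condition that guarantees the existence of a symmetric stabilizing solution, which supplies the desired $Y\succ 0$.

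The overall argument closely parallels the proof of Theorem~4 in \cite{yaghmaei2017trajectory}; what must be checked here is that the block pattern of $P$ in \eqref{eq:cond1}, dictated by the EM structure with its skew-symmetric $J_{\tt e}$ and positive definite mechanical damping $R_{\tt m}$, does not break the Hamiltonian structure of $N$ needed for the ARE correspondence. The hardest part of this plan is producing the exact coefficients $(1-\beta_1/\beta_2)$ and $\varepsilon$ in the specific off-diagonal positions shown in \eqref{eq:cond3}: this requires choosing the Finsler multiplier carefully and tracking its effect on each block of the resulting linear matrix inequality, and it is where the adaptation from the reference's generic interconnection matrix to the EM-specific $P$ must be made explicit. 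Once this bookkeeping is done, the conclusion that \eqref{eq:closed-loop} is contractive on $D_0$ follows immediately from the standard incremental Lyapunov argument.
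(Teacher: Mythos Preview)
Your plan is a reasonable sketch of how one would prove the cited Theorem~4 of \cite{yaghmaei2017trajectory} from scratch, but it is not how the paper actually argues, and it overlooks the one piece of work the paper does carry out.

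The paper's proof is much shorter: it simply \emph{invokes} Theorem~4 of \cite{yaghmaei2017trajectory}. Conditions \eqref{eq:cond2} and \eqref{eq:cond3} are already the hypotheses of that theorem, so nothing needs to be rederived about Riccati solutions, Finsler multipliers, or the Hamiltonian matrix $N$. The only hypothesis of the cited theorem that is \emph{not} explicitly assumed in the lemma statement is that $P$ be Hurwitz, and the entire content of the paper's proof is to verify this for the EM-specific block structure. Concretely, $P$ is block diagonal with blocks
\[
F_1=\begin{bmatrix}0&I\\-I&-R_{\tt m}\end{bmatrix},\qquad F_2=J_{\tt e}-\bar R_{\tt e}.
\]
For $F_2$, skew-symmetry of $J_{\tt e}$ and $\bar R_{\tt e}\succ0$ give $F_2+F_2^\top=-2\bar R_{\tt e}\prec0$, so $F_2$ is Hurwitz. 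For $F_1$, a Schur-complement computation reduces the characteristic polynomial to $\det(\lambda^2 I+R_{\tt m}\lambda+I)$, and diagonalizing $R_{\tt m}=U_{\tt m}\Sigma_{\tt m}U_{\tt m}^\top$ with $\Sigma_{\tt m}=\operatorname{diag}(\sigma_i)$, $\sigma_i>0$, factors this as $\prod_i(\lambda^2+\sigma_i\lambda+1)$, each factor having roots with negative real part. Hence $P$ is Hurwitz and the cited theorem applies directly.

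So your proposal takes a genuinely different route: you attempt to reconstruct the contraction argument behind the cited theorem, whereas the paper treats that theorem as a black box and only checks the Hurwitz prerequisite. Your outline is broadly sound, but (i) it is far more work than needed, and (ii) you never explicitly verify that $P$ is Hurwitz, which in the standard ARE/Hamiltonian correspondence is precisely what ensures the stabilizing solution $Y$ is positive definite --- without it, the ``existence of a symmetric stabilizing solution'' you appeal to need not yield a valid contraction metric. The paper's approach buys simplicity and isolates the only EM-specific computation; yours would buy self-containment at the cost of redoing \cite{yaghmaei2017trajectory}.
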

    \begin{proof}
    It follows from \cite[Theorem 4]{yaghmaei2017trajectory} that
    $P$ is designed as a Hurwitz matrix. For this purpose, we redefine $P$ in  \eqref{eq:cond1} as follows
  \begin{equation}
	 \label{eq:bar-P}
	 P= \begin{bmatrix}
	 F_1 & \quad  0_{2n_{\tt m} \times n_{\tt e}}  \\[2 mm]
	 0_{n_{\tt e} \times 2n_{\tt m}} & F_2 \end{bmatrix},
	 \end{equation} 
  where 
\begin{equation}
F_1 \triangleq \begin{bmatrix}
0_{n_{\tt m} \times n_{\tt m}} & \quad  I\\[2 mm]
-I & \quad -R_{\tt m}
\end{bmatrix},
\end{equation}
and $F_2 \triangleq J_{\tt e} - \bar{R}_{\tt e}$. Therefore, $P$ is Hurwitz if and only if $F_1$ and $ F_2$ are Hurwitz. Since $ F_2 \prec 0 $, this matrix is Hurwitz. To prove $F_1$ is Hurwitz, consider
    \begin{equation*}
      S(\lambda)\triangleq\lambda I-F_1.
  \end{equation*}
Therefore, the eigenvalues of $F_1$ are determined by the following equation:
  \begin{equation*}
      \textnormal{eig}(F_1)=\det\big( S(\lambda)\big)=0.
  \end{equation*}
According to the Schur complement, we have
  \begin{equation*}
      \textnormal{eig}(F_1)=\det \left(I\lambda\right){\det \left(I\lambda+R_{\tt m}+I\frac{1}{\lambda}\right)}=
      \det\left(I\lambda \frac{1}{\lambda}\right){\det (I\lambda^2+R_{\tt m}\lambda+I)}=0.
  \end{equation*}
Furthermore, there exists an orthogonal matrix $U_{\tt m}$ and a positive definite matrix $\Sigma_{\tt m} \triangleq \mbox{diag}(\sigma_1,...,\sigma_{{n_{\tt m}}}) $, where $\sigma_i>0$, $i\kern-.3em\in\kern-.3em \{1,...,n_{\tt m}\}$, such that $\Sigma_{\tt m}=U_{\tt m}^\top R_{\tt m} U_{\tt m}$.
Therefore,  
\begin{equation}
\label{eq:detL}
\textnormal{eig}(F_1)={\det (I\lambda^2+R_{\tt m}\lambda+I)}=
\det(\lambda^{2} I+\Sigma_{\tt m}\lambda+I)=\prod_{\tt i=1}^{n_{\tt m}}(\lambda^2+\sigma_i\lambda+1).
\end{equation}
Because $\sigma_i>0$, the eigenvalues of \eqref{eq:detL} are negative, implying that $F_1$ is Hurwitz. Accordingly, $P$ defined in \eqref{eq:cond1} is Hurwitz. Besides, conditions \eqref{eq:cond2} and \eqref{eq:cond3}  follow immediately from the results reported in \cite[Theorem 4]{yaghmaei2017trajectory}. Hence, \eqref{eq:closed-loop} is contractive. $\hfill \square$
    \end{proof}

The following proposition provides a static tracking controller for the EM systems characterized by \eqref{eq:open-loop}. To this end, based on Lemma \ref{lem}, we propose a contractive closed-loop system such that $\eta^\star$ is a feasible trajectory of the closed-loop system. Hence, the
convergence property of the contractive systems ensures that all the trajectories of the closed-loop system converge exponentially over time, solving the trajectory-tracking problem.

\begin{proposition}
Consider the EM system \eqref{eq:open-loop} satisfying Assumption \ref{ass} and the feasible trajectory $\eta^\star$.
Suppose that there exist positive definite matrices $\bar R_{\tt e}$ and $K_{\tt c}$ satisfying the conditions of Lemma \ref{lem} for $\Theta=\frac{1}{2}(x_{\tt e}-\alpha(t))^\top K_{\tt c}(x_{\tt e}-\alpha(t))$, where
\begin{align}
   \alpha(t)\triangleq{K_{\tt c}}^{-1}{\Psi}(q^\star)(x^\star_{\tt e}-\mu(q^\star))+x_{\tt e}^\star- K_{\tt c}^{-1} (J_{\tt e} -\bar R_{\tt e})^{-1}\dot{x}_{\tt e}^\star.\label{controller}
\end{align}

The static feedback controller 
\begin{align}
\label{eq:controller}
	u=G_{\tt e} ^{\dagger} \bigg(( R_{\tt e}-\bar R_{\tt e}){\Psi(q)(x_{\tt e}-\mu(q))}+(J_{\tt e}-\bar R_{\tt e})K_{\tt c}(x_{\tt e}-\alpha(t))\bigg)
\end{align}
guarantees that the trajectories of the closed-loop system converge exponentially to $\eta^{\star}$. 
\end{proposition}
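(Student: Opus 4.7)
The plan is to show that the feedback \eqref{eq:controller} transforms the open-loop system \eqref{eq:open-loop} exactly into the target dynamics \eqref{eq:closed-loop} of Lemma \ref{lem}, with $H_d(\eta,t)=H(\eta)+\Theta(x_{\tt e},t)$ and $\Theta$ as specified in the proposition, and then to verify that the desired trajectory $\eta^\star$ is a solution of that closed-loop system. Contractivity, supplied by Lemma \ref{lem}, will then do the rest.

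First, I would compute the closed-loop dynamics. The mechanical rows of \eqref{eq:open-loop} are unaffected by $u$, and since $\Theta$ depends only on $x_{\tt e}$ and $t$ one has $\nabla_{q}H_d=\nabla_{q}H$ and $\nabla_{p}H_d=\nabla_{p}H$, so the $\dot q$ and $\dot p$ equations already match \eqref{eq:closed-loop}. The electrical row requires
\begin{equation*}
(J_{\tt e}-R_{\tt e})\Psi(q)(x_{\tt e}-\mu(q))+G_{\tt e}u = (J_{\tt e}-\bar R_{\tt e})\bigl[\Psi(q)(x_{\tt e}-\mu(q))+K_{\tt c}(x_{\tt e}-\alpha(t))\bigr].
\end{equation*}
Solving this equation for $u$ and using $G_{\tt e}^{\dagger}G_{\tt e}=I$ (recall that $G_{\tt e}$ is square and full rank) yields precisely \eqref{eq:controller}. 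Hence the closed-loop is exactly of the form \eqref{eq:closed-loop}.

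The second, and in my view the only genuinely delicate, step is to check that the specific $\alpha(t)$ in \eqref{controller} makes $\eta^\star$ a trajectory of the closed-loop system. Note that $J_{\tt e}-\bar R_{\tt e}$ is invertible because its symmetric part $-\bar R_{\tt e}$ is negative definite. From the definition of $\alpha(t)$ a direct manipulation gives
\begin{equation*}
K_{\tt c}\bigl(x_{\tt e}^\star-\alpha(t)\bigr)=-\Psi(q^\star)(x_{\tt e}^\star-\mu(q^\star))+(J_{\tt e}-\bar R_{\tt e})^{-1}\dot x_{\tt e}^\star,
\end{equation*}
so that $\Psi(q^\star)(x_{\tt e}^\star-\mu(q^\star))+K_{\tt c}(x_{\tt e}^\star-\alpha(t))=(J_{\tt e}-\bar R_{\tt e})^{-1}\dot x_{\tt e}^\star$. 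Multiplying by $J_{\tt e}-\bar R_{\tt e}$ reproduces $\dot x_{\tt e}^\star$, i.e.\ the electrical row of \eqref{eq:closed-loop} evaluated along $\eta^\star$. Since $\eta^\star$ already satisfies the mechanical rows by feasibility in the original system \eqref{eq:open-loop} (which share their form with \eqref{eq:closed-loop}), $\eta^\star$ is a trajectory of the closed-loop dynamics.

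Finally, the hypothesis of the proposition states that $\bar R_{\tt e}$ and $K_{\tt c}$ have been chosen so that the conditions of Lemma \ref{lem} hold for $\Theta=\tfrac12(x_{\tt e}-\alpha(t))^\top K_{\tt c}(x_{\tt e}-\alpha(t))$ on $D_{\tt 0}$; Lemma \ref{lem} then asserts that the closed-loop system is contractive on $D_{\tt 0}$. Since contractivity guarantees that any two solutions converge exponentially to each other, and $\eta^\star$ is itself a solution of the closed-loop, every trajectory of the closed-loop system initialized in $D_{\tt 0}$ converges exponentially to $\eta^\star$, which is the desired conclusion. The main obstacle throughout is purely algebraic, namely identifying the right $\alpha(t)$ so that $\eta^\star$ remains a solution after the energy $H$ has been reshaped by $\Theta$ and the electrical damping has been altered from $R_{\tt e}$ to $\bar R_{\tt e}$; once $\alpha(t)$ is chosen as in \eqref{controller}, Lemma \ref{lem} supplies the convergence property without further work.
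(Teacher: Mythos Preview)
Your proposal is correct and follows exactly the same three-step argument as the paper: the controller \eqref{eq:controller} matches the open-loop to the target dynamics \eqref{eq:closed-loop}, the choice of $\alpha(t)$ ensures $\eta^\star$ is a trajectory of the closed-loop, and Lemma~\ref{lem} then yields exponential convergence by contractivity. You have simply written out in full the algebraic verifications that the paper states in one line each.
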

\begin{proof}
Applying the controller \eqref{controller}, the closed-loop system reduces to \eqref{eq:closed-loop}. Because the conditions of Lemma \ref{lem}  are satisfied by $\bar R_{\tt e}$ and $K_{\tt c}$, the closed-loop system \eqref{eq:closed-loop} is contractive. In addition, $\alpha(t)$ in \eqref{controller} ensures that $\eta^\star(t)$ is a feasible trajectory of the closed-loop system. Thereby, all trajectories of the closed-loop system exponentially converge to $\eta^\star(t)$ due to the convergence property in contractive systems.$\hfill \square$
\end{proof}

\begin{remark}
 The controller \eqref{controller} solves the regulation problem if $\eta^\star$ is chosen constant. Moreover, in such a case, $\eta^\star$ is an exponentially stable equilibrium point for the closed-loop system.  
\end{remark}

\section{Examples}
\label{sec:ex}

In this section, we apply the results from Section \ref{sec:control} to various EM systems, indicating the performance of the proposed controller \eqref{eq:controller}.
For this purpose, we select a stepper motor system with invariant $\mu(q)$, microphone, and loudspeaker. In particular, the microphone system stores the electrical energy in a capacitor, while the loudspeaker system utilizes an inductor for the electrical energy storage. It is worth noting that we have shown input control plots with a starting point of 0.001 seconds. The reason is that there are spikes at the starting point due to numerical errors in MATLAB. Therefore, we remove the first points of the plots to clearly indicate the magnitude range thereafter. 

\subsection{Stepper Motor}

 The EM pH system \eqref{eq:open-loop} can effectively model a 2$\phi$ stepper motor. In this presentation, $q$, $p$, and $x_{\tt e}$ correspond to the mechanical angular position, mechanical angular momentum, and the stator flux linkage, respectively. Furthermore, $n_{\tt m}=1, n_{\tt e}=2$, $J_{\tt e}=0_{2 \times 2}$, $R_{\tt e}=R_{\tt s} I_{2 \times 2}$, $G_{\tt e}=2 I_{2 \times 2}$, $V(q)=-\frac{1}{4N_{\tt r}}k_{\tt D}\cos(4N_{\tt r}q)+\tau_{\tt L} q$,  $\mu(q)=\frac{k_{\tt m}}{N_{\tt r}} \begin{bmatrix}\cos(N_{\tt r}q),\sin(N_{\tt r}q)\end{bmatrix}^\top$, $\Psi(q)=L_{\tt s} I_{2\times 2}$,
 where $L_{\tt s}$ is the stator winding inductance, $R_{\tt s}$ is the stator winding resistance, $k_{\tt D}$ is the \textit{detente torque} constant, $N_{\tt r}$ denotes the number of teeth of the rotor, $k_{\tt m}$
 is the \textit{dq} back emf constant, $\tau_{\tt L}$ is the electromechanical torque, and $m>0$. The tracking problem requires $q$ to track the reference signal $f(t)$. Therefore, the feasible trajectory $\eta$ is obtained as follows 
\begin{equation}
\label{reference trajectory}
\begin{split}
  q^\star=f, \  p^\star=m\dot{f}, \ x^{\star}_{{\tt e}_1}= -\frac{L_{\tt s}C}{k_{\tt m}}\sin({N_{\tt r}f}),
  \ x^{\star}_{{\tt e}_2}= \frac{L_{\tt s}C}{k_{\tt m}}\cos({N_{\tt r}f}),
\end{split}    
\end{equation}
  where $q^\star \in \begin{bmatrix}-\pi,\pi\end{bmatrix}$ and $C=m\ddot{f}+k_{\tt D}\sin(4N_{\tt r}f)+R_{\tt m} \dot{f}+\tau_{\tt L}$. Besides, the subsequent numerical parameters are used

\begin{equation*}
\begin{split}
   & f(t)=0.5\sin(t), \quad M=1.872 \times 10^{-4}, \quad R_{\tt m}=2, \quad R_{\tt s}=0.9, \quad k_{\tt m}=2.25,\\& k_{\tt D}=0.0176, \quad \tau_{\tt L}=1.720129, \quad N_{\tt r}=6,\quad
     \bar{R}_{\tt e}=\mbox{diag}(23,25), \quad K_{\tt c}=35.
\end{split}
\end{equation*} 
and the initial conditions of the states are 
\begin{equation*}
\eta_0(t)=\begin{bmatrix}
    0.2,-0.02,1.6,-0.3
\end{bmatrix}^\top.
\end{equation*}
 
As shown in  Fig. \ref{fig:motorE}, the state error signals converge to zero over time. Besides, the control signals are depicted in Fig. \ref{fig:motorE}. Furthermore, Fig. \ref{fig:motorT} shows the trajectory convergence to the desired reference signals.

\begin{figure}[!ht]
\includegraphics[width=1.05\columnwidth]{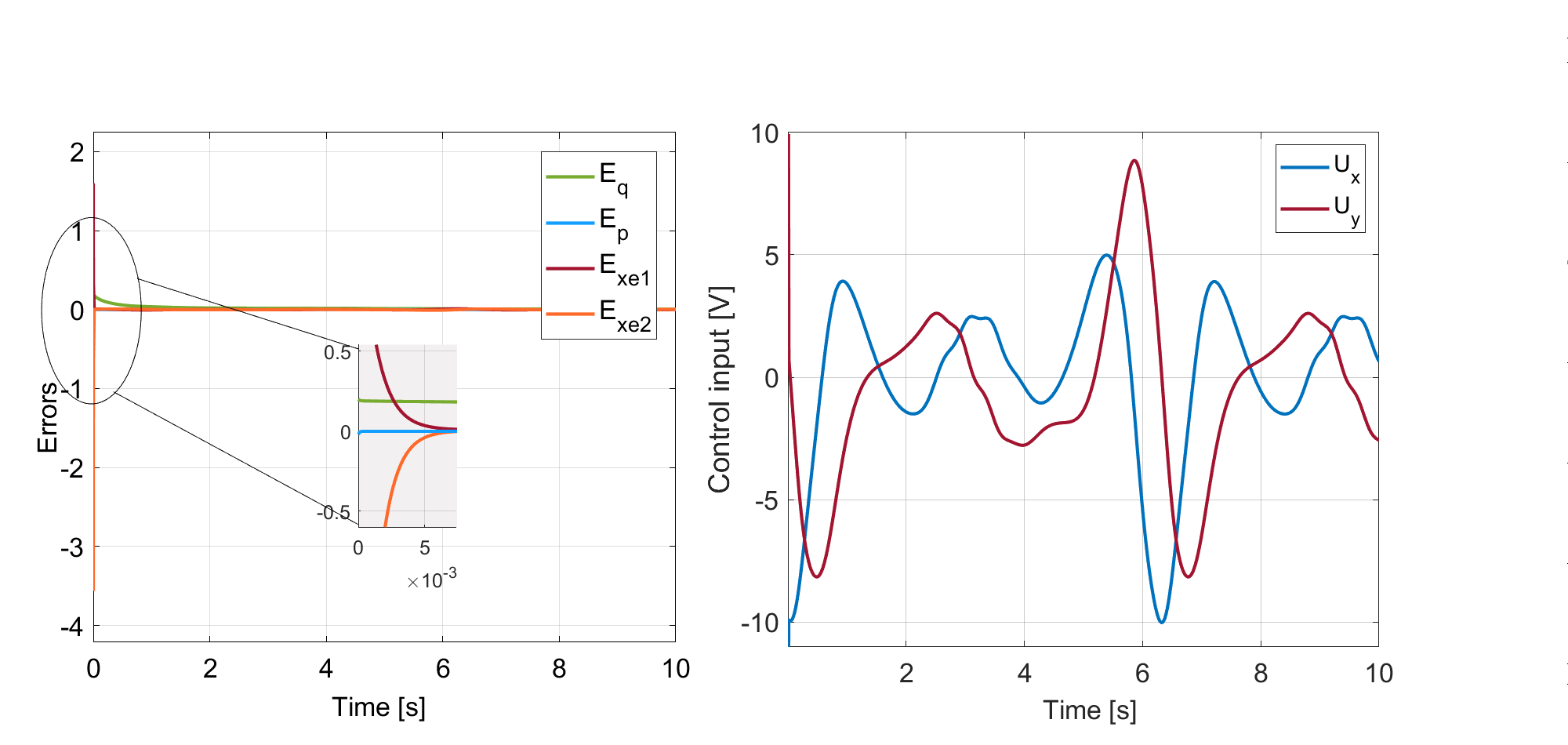}
\centering
\vspace{-0.3cm}
\caption{The position, velocity, and electrical error signals, which are denoted by $E_{\tt q}$, $E_{\tt p}$, and $E_{\tt x_e}$, respectively. Besides, the control signals from the voltage resource are denoted by $U_{\tt x}$ and $U_{\tt y}$, respectively. }
\label{fig:motorE}
\end{figure}

\begin{figure}[!ht]
\includegraphics[width=1\columnwidth]{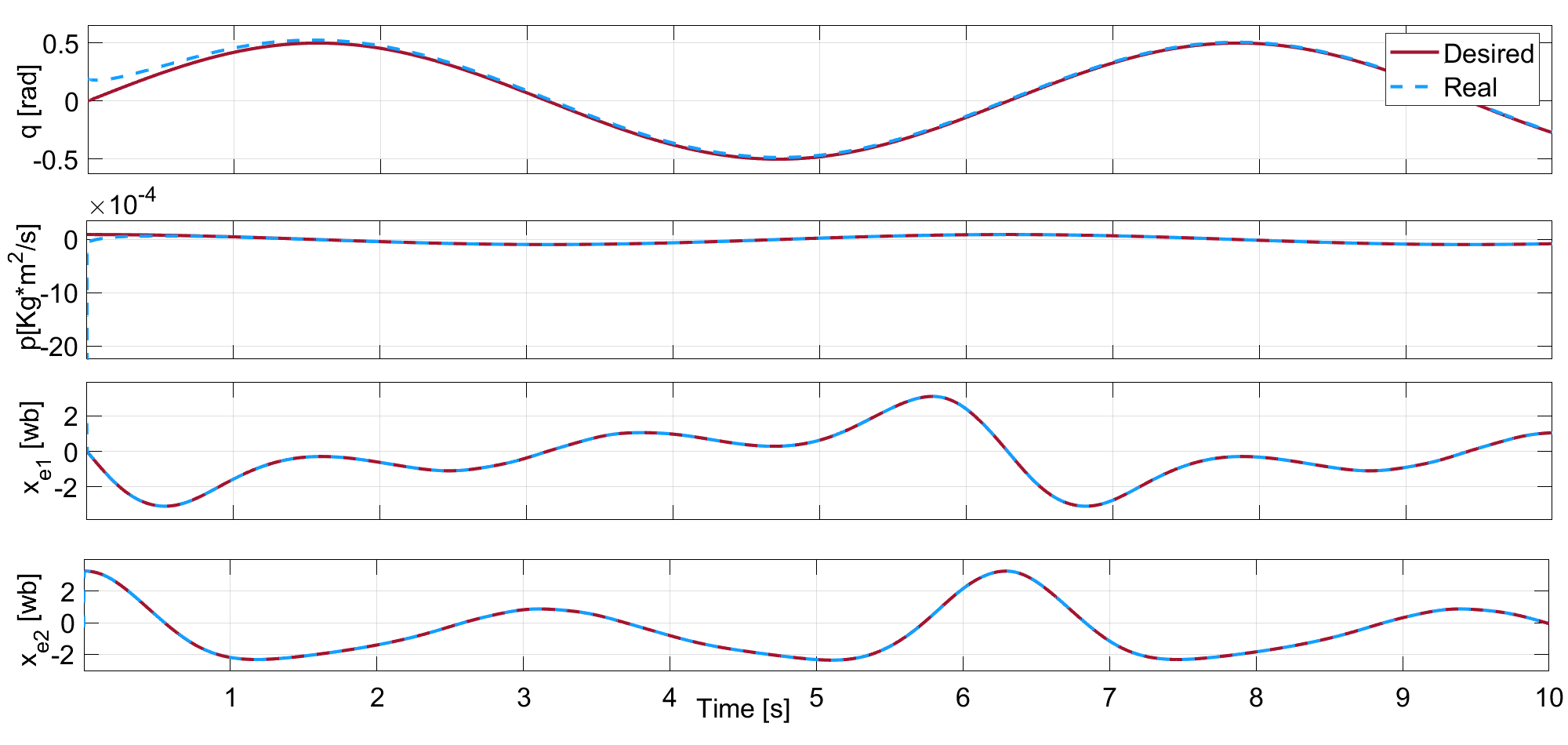}
\centering
\vspace{-0.3cm}
\caption{The desired and closed-loop trajectories for the stepper motor. Note that $q(t)$ tracks the reference signal $f(t)=0.5\sin(t)$.}
\label{fig:motorT}
\end{figure}

\subsection {Microphone system}

The dynamics of a microphone can be described by the EM pH model \eqref{eq:open-loop}, where $q$, $p$, and $x_{\tt e}$ represent the displacement of the movable capacitor plate, its momentum, and the charge at the capacitor, respectively. 
Besides, the model parameters are considered as
$n_{\tt m}=n_{\tt e}=1$, $J_{\tt e}=0$, $G_{\tt e}=1$, $V(q)=\frac{1}{2}k(q-\gamma_1)^2$, $\mu(q)=0$, $\Psi(q)=\frac{1}{C(q)}$, where 
$k$ and $\gamma_1$ are positive. Similarly, $q\in(0,\infty)$. Hence, the capacitance $C(q)>0$ satisfies $C(q)=\frac{1}{q}$.

For the tracking trajectory problem, we want $q$ to track the reference signal $f(t)$. Therefore, the feasible desired trajectories are 
\begin{equation}
\label{reference trajectory}
\begin{split}
  q^\star=f, \  p^\star=m\dot{f}, \ x^{\star}_{\tt e}= \sqrt{2k(\gamma_1-f)-2R_{\tt m}\dot{f}-2m\ddot{f}}, 
\end{split}    
\end{equation}
where $2k(\gamma_1-f)>2R_{\tt m}\dot{f}+2m\ddot{f}$. For simulation purposes, the system and control parameters are chosen as follows:

\begin{equation*}
\begin{split}
   &f(t)=0.3+0.2\sin(t), \quad M=1, \quad R_{\tt m}=1, \quad R_{\tt e}=1, \quad k=1, \quad \gamma_1=1,\\&
     \bar{R}_{\tt e}=30, \quad K_{\tt c}=55.
\end{split}
\end{equation*}
and the initial conditions of the states are 

\begin{equation*}
\eta_0(t)=\begin{bmatrix}
   0.02,-0.02,0.3
\end{bmatrix}^\top.
\end{equation*}

The results are shown in Figs. \ref{fig:micerror} and \ref{fig:mictrajectory} . Besides, the position, velocity, and coupling errors of the system converge to zero, as shown in Fig. \ref{fig:micerror}.   
Note that the system states exponentially track the desired references \eqref{reference trajectory} in Fig. \ref{fig:mictrajectory}

\begin{figure}[!ht]
\includegraphics[width=1.05\columnwidth]{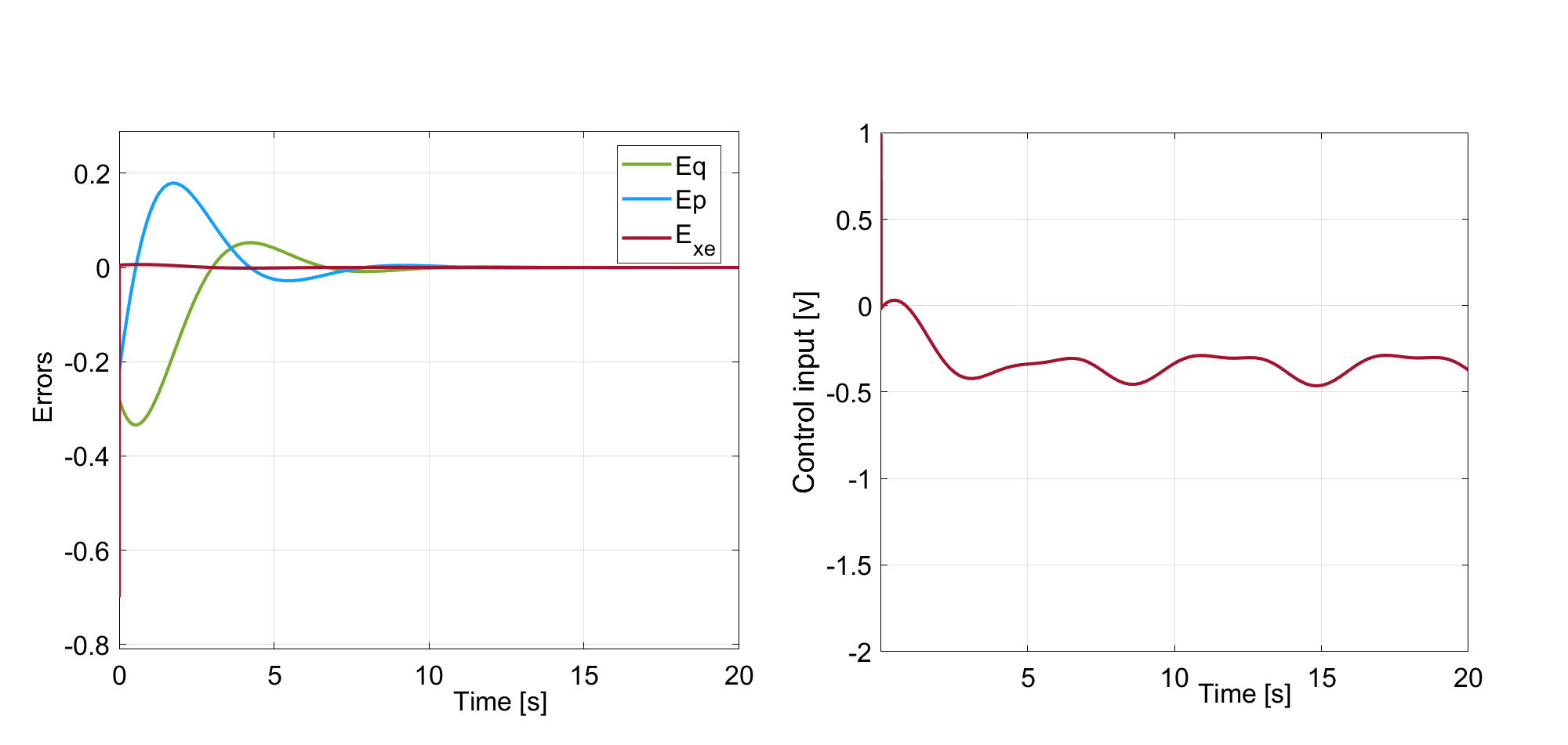}
\centering
\vspace{-0.3cm}
\caption{The position, velocity, and coupling error signals, which are denoted by $E_{\tt q}$, $E_{\tt p}$ and $E_{\tt x_e}$, and the control signal from a voltage resource, respectively. }
\label{fig:micerror}
\end{figure}
\begin{figure}[!ht]
\includegraphics[width=1\columnwidth]{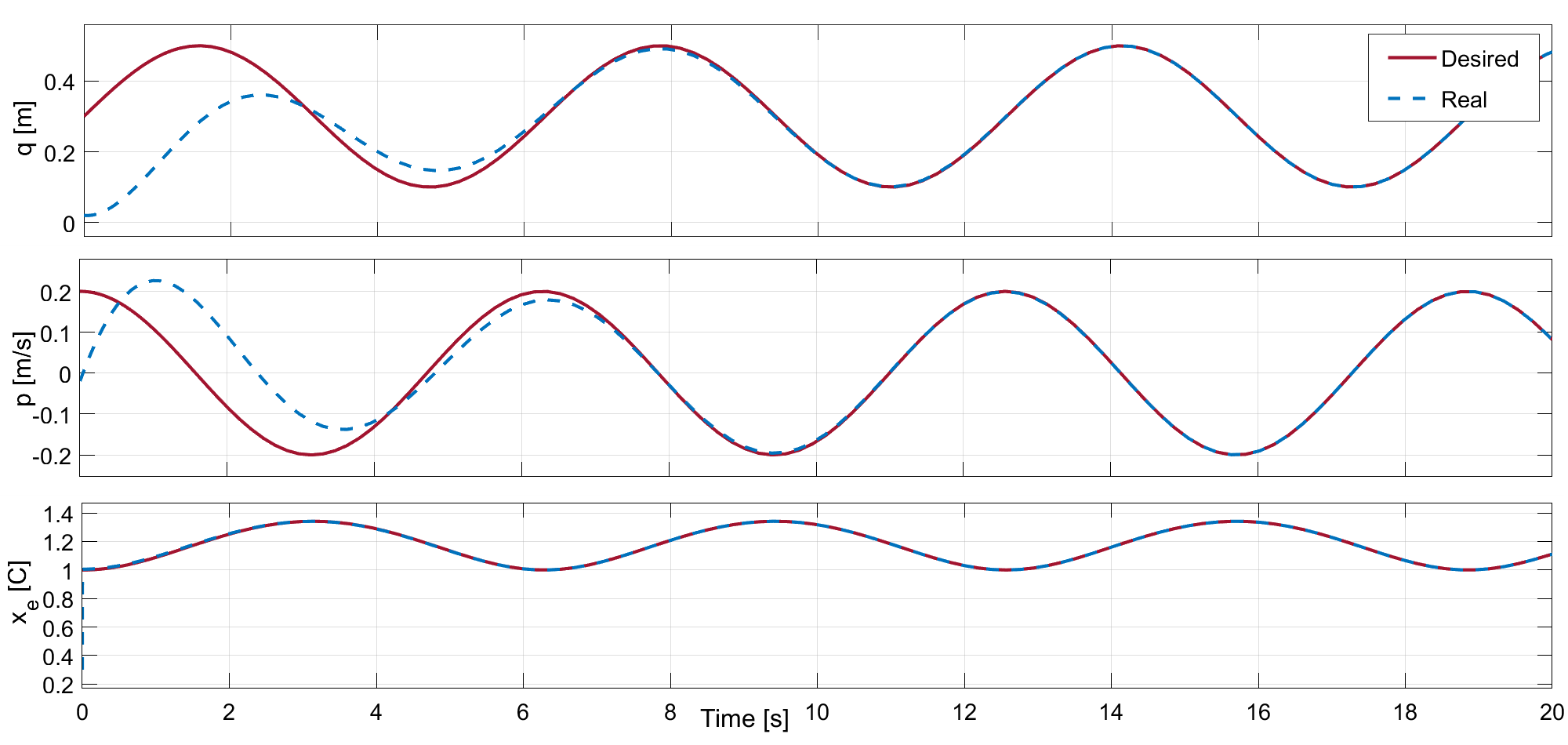}
\centering
\vspace{-0.3cm}
\caption{ The desired and closed-loop trajectories for the microphone system. Note that $q(t)$ tracks the signal $f(t)=0.3+0.2\sin(t)$. }
\label{fig:mictrajectory}
\end{figure}

\subsection{Loudspeaker system}
The dynamics of a loudspeaker can be described by the EM pH model \eqref{eq:open-loop} with $n_{\tt m}=n_{\tt e}=1$.
In this model, $q$, $p$, and $x_{\tt e}$ denote the mechanical displacement of the voice coil, the translational momentum, and the current through the inductor over the displacement, respectively. Furthermore, the model parameters are $R_{\tt e}=0$, $J_{\tt e}=0$, $G_{\tt e}=1$, $V(q)=\frac{1}{2}kq^2$, $\mu(q)=0$, $\Psi(q)=\frac{1}{\alpha L(q)}$, where $k$ and $\alpha$ are positive. Similarly,
$q \in (0, \infty)$. The inductance $L(q)>0$ satisfies $L(q)=\frac{1}{q}$. 

For the tracking trajectory aim, we consider that $q$ tracks a reference signal $f(t)$. The feasible desired trajectories follow by the same formulation in \eqref{reference trajectory} with $\gamma_1=0$. The control and system parameters are chosen as follows:
\begin{equation*}
\begin{split}
   & f(t)=0.3\sin(t)-1, \quad M=1, \quad R_{\tt m}=1, \quad R_{\tt e}=1, \quad k=1, \quad  \alpha=4\\&
     \bar{R}_{\tt e}=100, \quad K_{\tt c}=3.5.
\end{split}
\end{equation*}
and the initial conditions of the states are 

\begin{equation*}
\eta_0(t)=\begin{bmatrix}
   0.02,-0.02,2.3
\end{bmatrix}^\top.
\end{equation*}

 Hence, Fig. \ref{fig:lserror} shows that the state errors converge to zero within the time. Besides, the control signal can be evaluated by Fig. \ref{fig:lserror}. 
 In particular, the mechanical and electrical states follow the mentioned desired reference trajectory in Fig. \ref{fig:lstrajectort}.  
\begin{figure}[!ht]
\includegraphics[width=1\columnwidth]{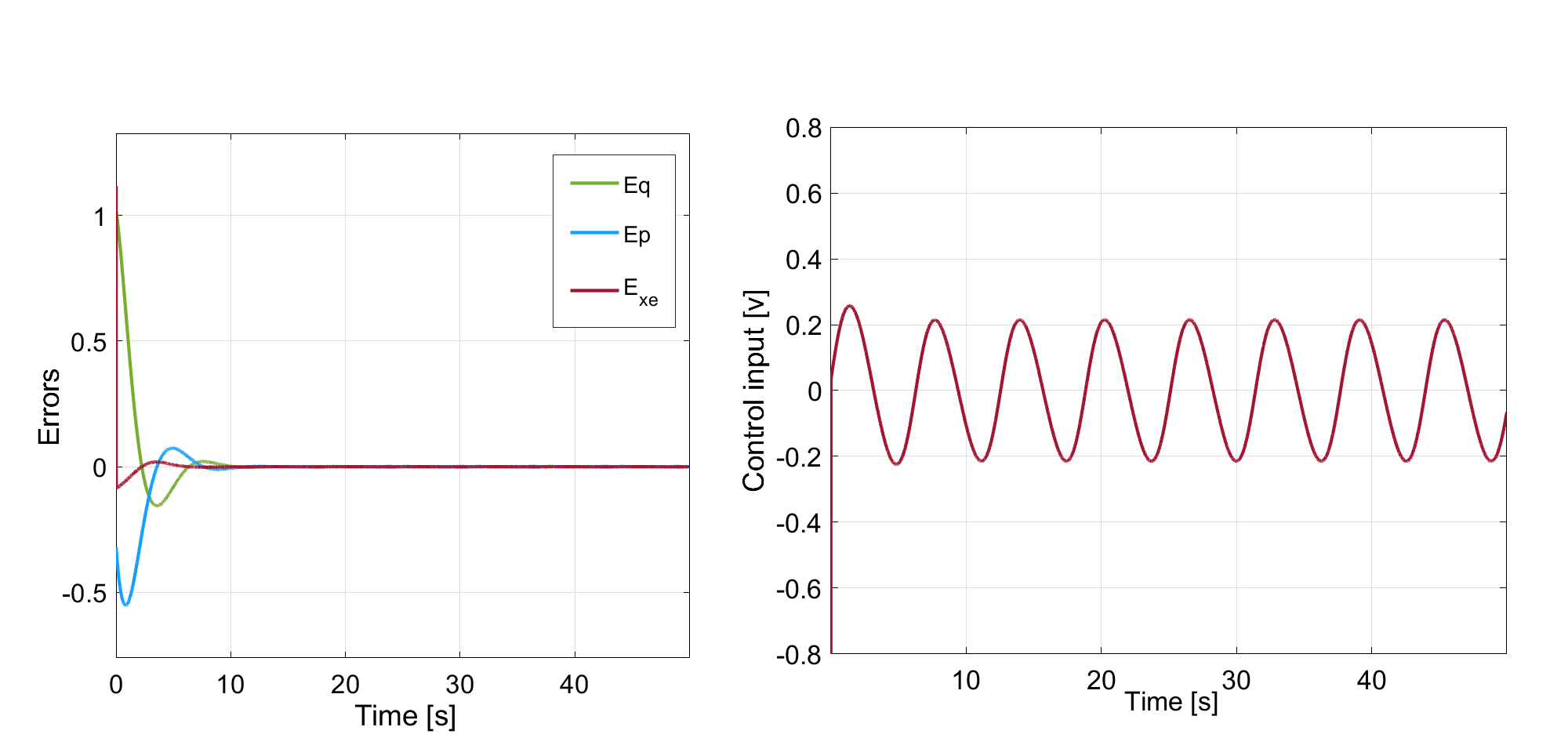}
\centering
\vspace{-0.3cm}
\caption{The position, velocity, and coupling error signals, which are denoted by $E_{\tt q}$, $E_{\tt p}$, and $E_{\tt x_e}$, and the control signal from a voltage resource, respectively. }
\label{fig:lserror}

\end{figure}

\begin{figure}[!ht]
\includegraphics[width=1\columnwidth]{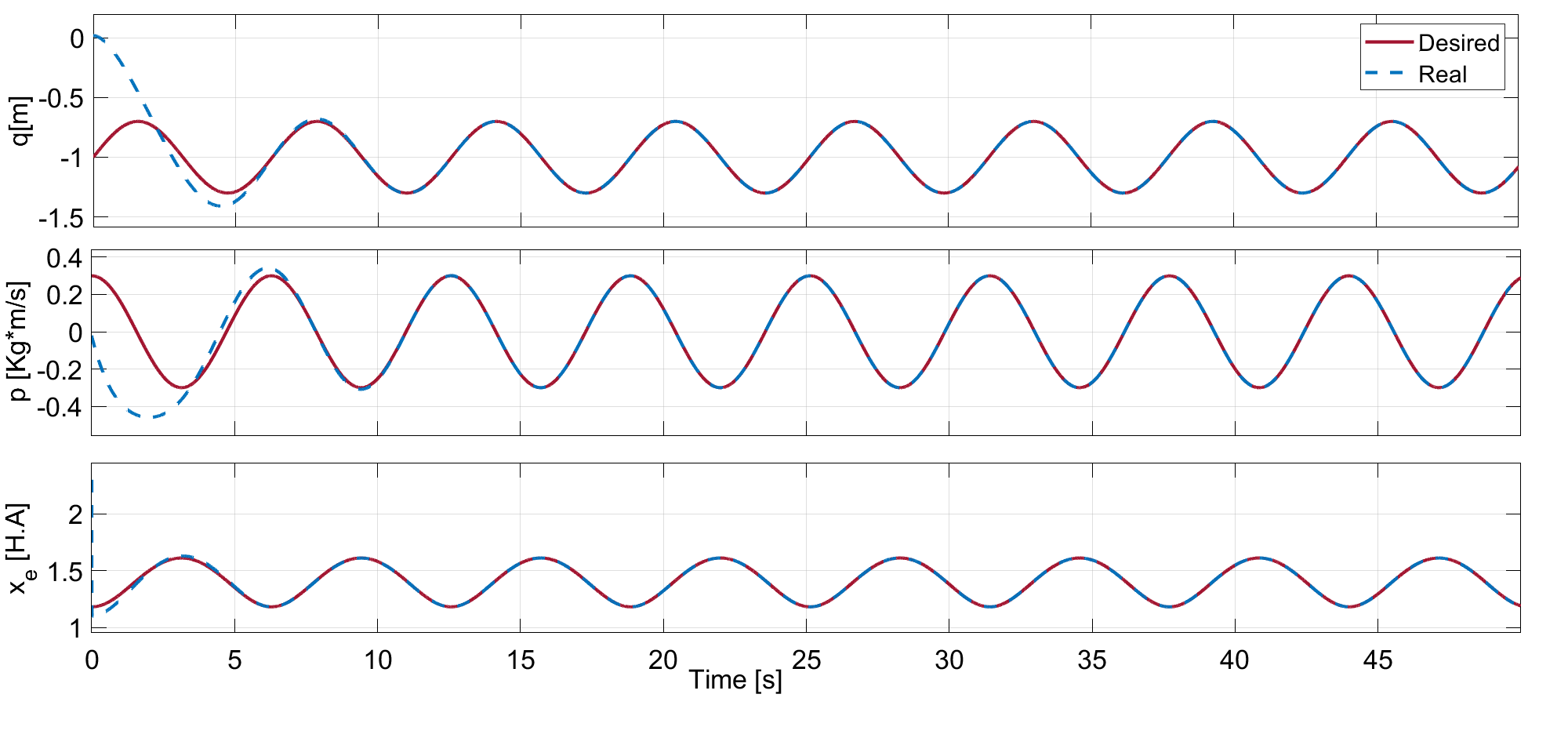}
\centering
\vspace{-0.3cm}
\caption{ The desired and closed-loop trajectories for the loudspeaker system. Note that $q(t)$ tracks $f(t)=0.3\sin t-1$. }
\label{fig:lstrajectort}
\end{figure}
\section{Concluding Remarks}
\label{sec:con}
Contraction is a powerful tool that can be exploited for trajectory-tracking purposes. In this paper, we have shown how the trajectory problem for a class of systems can be solved by proposing a contractive closed-loop system while guaranteeing the desired trajectory is feasible. Moreover, the target dynamics are designed constructively so that no PDEs need to be solved to formulate the controller that achieves the desired closed-loop system.

\section*{Declarations}
\subsection*{Availability of data and material}
The data associated with the simulations are available upon reasonable request.
\subsection*{Competing interests}
The authors declare that they have no conflict of interest concerning the publication of this manuscript.
\subsection*{Funding}
The authors didn't receive particular funding for this work.
\subsection*{Authors' contributions}
{\bf Najmeh Javanmardi.} was in charge of the technical developments, simulations, and writing of this manuscript.\\[0.1cm]
{\bf Pablo Borja.} co-supervised the research process, helped with the technical developments, and suggested some of the chosen research directions explored in this work.\\[0.1cm]
{\bf Jacquelien M. A. Scherpen.} provided her expert guidance throughout the research process and spearheaded the conception of the study.
\subsection*{Acknowledgements}
The authors want to thank the French-Dutch-German Doctoral College of Port-Hamiltonian Systems Modeling Numerics and Control for the valuable feedback received during numerous interactions.


\begin{thebibliography}{99}
%
\bibitem {borovic2004control}
Borovic, B., Hong, C., Liu, A. Q., Xie, L., Lewis, F. L.:
Control of a MEMS optical switch. 43rd IEEE Conference on Decision and Control, 3, pp. 3039--3044 (2004).
\bibitem{BOR}
Borja, P., Cisneros, R., Ortega, R.: A constructive procedure for energy shaping of port-Hamiltonian systems. Automatica, 72, pp. 230--234 (2016).
\bibitem {chu1994analysis}
Chu, P.B., Pister, S.J.: Analysis of closed-loop control of parallel-plate electrostatic microgrippers. 1994 IEEE International Conference on Robotics and Automation, pp. 820--825 (1994).
\bibitem{dean2009applications}
Dean, R.N., Luque, A.: Applications of microelectromechanical systems in industrial processes and services. IEEE Transactions on Industrial Electronics, vol. 56, no 4, pp. 913--925 (2009).
\bibitem{GEO}
Duindam, V. and Macchelli, A. and Stramigioli, S. and Bruyninckx, H. (Eds.): Modeling and control of complex physical systems: the port-Hamiltonian approach. Springer Science \& Business Media (2009).
\bibitem{maithripala2005control}
Maithripala, D.H.S., Berg, J.M., Dayawansa, W.P.: Control of an electrostatic microelectromechanical system using static and dynamic output feedback. Journal of Dynamic Systems, Measurement, and Control, 127, pp. 443--450 (2005).
\bibitem{nadal}
Nadal-Guardia, R., Dehe, A., Aigner, R., Castaner, L. M.: Current drive methods to extend the range of travel of electrostatic microactuators beyond the voltage pull-in point. Journal of microelectromechanical systems, vol. 11, no 3, pp. 255--263 (2002). 
\bibitem{nunna2015constructive}
Nunna, K., and Sassano, M., and Astolfi, A.: Constructive interconnection and damping assignment for port-controlled Hamiltonian systems. IEEE Transactions on Automatic Control,
vol. 60, no 9, pp. 2350--2361 (2015).
\bibitem{ortega2001putting}
Ortega, R., van der Schaft, A. J., Mareels, I., Maschke, B.: Putting energy back in control. IEEE Control Systems Magazine, vol. 21, no 2, pp.18--33 (2001).
\bibitem{ortega2002interconnection}
Ortega, R., van der Schaft, A. J., Maschke, B., Escobar, G.: Interconnection and damping assignment passivity-based control of port-controlled Hamiltonian systems. Automatica, vol. 38, no 4, pp.585--596 (2002).
\bibitem{OWUSU}
Owusu, K. O. and Lewis, F. L.: Solving the ``pull-in'' instability problem of electrostatic microactuators using nonlinear control techniques. 2nd IEEE International Conference on Nano/Micro Engineered and Molecular Systems, pp. 1190--1195 (2007). 
\bibitem{rodriguez2021exponential}
Rodr\'{\i}guez, L.F.: Exponential trajectory tracking passivity-based control for permanent-magnet synchronous motors. SAE Technical Paper 2021-01-5047 (2021).
\bibitem{rodriguez2003stabilization}
Rodr\'{\i}guez, H., Ortega, R.: Stabilization of electromechanical systems via interconnection and damping assignment. International Journal of Robust and Nonlinear Control, vol. 13, no 12, pp. 1095--1111 (2003).
\bibitem{RYALAT}
Ryalat, M., Laila, D. S., ElMoaqet, H., Almtireen, N.: Dynamic IDA-PBC control for weakly-coupled electromechanical systems. Automatica, 115, pp. 108880 (2020).
\bibitem{VANJEL}
van der Schaft, A. J. and Jeltsema, D.: Port-Hamiltonian systems theory: an introductory overview. Foundations and Trends in Systems and Control, 2-3(1), pp. 173--378 (2014).
\bibitem{sun}
Sun, W. and Yeow, J. T. W. and Sun, Z.: Robust adaptive control of a one degree of freedom electrostatic microelectromechanical systems model with output-error-constrained tracking. IET control theory \& applications, 6(1), pp. 111--119 (2012). 
\bibitem {yaghmaei2017trajectory}
Yaghmaei, A., Yazdanpanah, M. J.: Trajectory tracking for a class of contractive port Hamiltonian systems. Automatica, 83, pp.331--336 (2017).
\bibitem {zhang2014electrostatic}
Zhang, W. M., Yan, H., Peng, Z. K., Meng, G.: Electrostatic pull-in instability in MEMS/NEMS: A review. Sensors and Actuators A: Physical, 214, pp.187--218 (2014).
\bibitem{zhang2015continuous}
Zhang, Y., Xian, B., Ma, S.: Continuous robust tracking control for magnetic levitation system with unidirectional input constraint. IEEE Transactions on Industrial Electronics, vol. 62, no 9, pp.5971--5980 (2015).
\bibitem{Zhu}
Zhu, G. and L{\'e}vine, J. and Praly, L.: Improving the performance of an electrostatically actuated MEMS by nonlinear control: Some advances and comparisons. 43th IEEE Conference on Decision and Control, pp. 7534--7539 (2005). 
\end{thebibliography}
\end{document}